\newtheorem{theorem}{Theorem}
\newtheorem{fact}{Fact}
\newtheorem{lemma}{Lemma}
\theoremstyle{definition}
\newtheorem{defin}{Definition}
\newtheorem*{question}{Question}
\newtheorem{problem}{Problem}
\newcommand{\cB}{\mathcal{B}}
\newcommand{\cM}{\mathcal{M}}
\newcommand{\cX}{\mathcal{X}}
\newcommand{\RR}{\mathbb{R}}
\newcommand{\bone}{\mathbf{1}}
\newcommand*{\ep}{\varepsilon}
\newcommand*{\eps}{\varepsilon}
\newcommand*{\defeq}{:=}
\newcommand*{\poly}{\mathrm{poly}}
\newcommand*{\dd}{\, \mathrm{d}}
\DeclareMathOperator*{\argmin}{argmin}
\newcommand*{\Phat}{\hat{P}}
\newcommand*{\What}{\hat{W}}
\newcommand*{\Ktilde}{\tilde{K}}
\newcommand*{\Coup}{\cM(\p, \q)}
\newcommand*{\Coupp}{\cM(\p', \q')}
\newcommand*{\p}{\mathbf{p}}
\newcommand*{\q}{\mathbf{q}}
\newcommand*{\Otilde}{\tilde{O}}
\DeclareMathOperator{\half}{\frac{1}{2}}
\newcommand{\R}{\mathbb R}
\newcommand*{\TaylorGKM}{\textsc{TaylorGKM}}
\newcommand{\ball}{\cB_2^d}
\newcommand{\diag}{\mathbb{D}}
\newcommand{\errr}{\text{err}_r}
\newcommand{\errc}{\text{err}_c}
\newcommand*\samethanks[1][\value{footnote}]{\footnotemark[#1]}
\title{Approximating the Quadratic Transportation Metric\\
 in Near-Linear Time}
\author{
        Jason Altschuler\footnote{Supported in part by NSF Graduate Research Fellowship 1122374.}\\
        MIT\\
        \texttt{jasonalt@mit.edu}
        \and
        Francis Bach\footnote{Supported in part by the European Research Council (grant SEQUOIA 724063).} \\
        INRIA \\
        \texttt{francis.bach@inria.fr}
        \and
        Alessandro Rudi\samethanks\\
        INRIA \\
        \texttt{alessandro.rudi@inria.fr}
        \and 
        Jonathan Weed\footnote{Supported in part by the Josephine de K\'arm\'an Fellowship.} \\
        MIT \\
        \texttt{jweed@mit.edu}
}
\begin{document}

\date{}
\maketitle

\begin{abstract}
Computing the quadratic transportation metric (also called the $2$-Wasserstein distance or root mean square distance) between two point clouds, or, more generally, two discrete distributions, is a fundamental problem in machine learning, statistics, computer graphics, and theoretical computer science. A long line of work has culminated in a sophisticated geometric algorithm due to~\citet{AgaSha14}, which runs in time $\tilde{O}(n^{3/2})$, where $n$ is the number of points. However, obtaining faster algorithms has proven difficult since the $2$-Wasserstein distance is known to have poor sketching and embedding properties, which limits the effectiveness of geometric approaches.
In this paper, we give an extremely simple deterministic algorithm with $\tilde{O}(n)$ runtime by using a completely different approach based on entropic regularization, approximate Sinkhorn scaling, and low-rank approximations of Gaussian kernel matrices. We give explicit dependence of our algorithm on the dimension and precision of the approximation.
\end{abstract}


\section{Introduction}\label{sec:intro}
Transportation metrics---known in various communities as Wasserstein distances, Kantorovich distances, and optimal transport distances---are a natural set of metrics on probability distributions supported on metric spaces with widespread applications throughout mathematics and statistics~\citep{Vil08}.

Given two distributions $\p$ and $\q$ on $\RR^d$, we define
\begin{equation}\label{eq:wp}
W_p(\p, \q) \defeq \min_{\gamma \in \Coup} \left(\int \|x - y\|^p \dd \gamma(x, y)\right)^{1/p}\,,
\end{equation}
where $\|\cdot\|$ is the Euclidean distance on $\RR^d$ and where $\Coup$ is the set of couplings between $\p$ and $\q$, that is, the set of all joint distributions on $\RR^d \times \RR^d$ whose projections onto the first and second component agree with $\p$ and $\q$, respectively (see, e.g.,~\citep{Vil03, San15} for background on Wasserstein distances and their properties).
Wasserstein distances have recently witnessed a surge of interest from the computer science community in large part due to their effectiveness in a number of practical domains, including image processing and retrieval~\citep{RubTomGui00,PelWer08,LazSchPon06} and machine learning~\citep{JitSzaChw16,ArjChiBot17,BouGelTol17}.

Of particular interest to applications is the $2$-Wasserstein distance $W_2$, which has been used for barycenter computation~\citep{BonRab15, CutDou14}, shape interpolation~\citep{BonVan11,SolGoePey15}, shape reconstruction~\citep{DeCoh11}, triangulations~\citep{MullMem11}, domain adaptation~\citep{CouFla15}, synthesis of soft maps~\citep{SolGuiBut13}, blue-noise generation~\citep{DeBree12}, and many more. For many such applications, $W_2$ gives better practical results than other $W_p$ distances (in particular $W_1$), see e.g. the discussions in~\citep{CouFla15,AndNaoNei16,MullMem11,DeBree12} and the references within.

There has been a great deal of work on the question of how fast $W_p$ can be computed between discrete distributions. Much of the focus, for computational reasons, has been on~$W_1$, also known as earth mover's distance. This research direction, inaugurated by embedding results into the $\ell_1$ metric~\citep{Cha02,Ind03}, has focused largely on efficiently \emph{sketching} $W_1$ (see, e.g.,~\citep{Ind04}), and has resulted in an algorithm to compute a $(1+\ep)$ multiplicative approximation to $W_1$ between two distributions supported on $n$ atoms in time $\Otilde(n)$, that is, \emph{nearly linear} in the size of the input~\citep{ShaAga12} (see also~\citep{Ind07,AgaSha14,AndNikOna14}).\footnote{In this section, we suppress dependence on the dimension $d$ and precision $\ep$ and defer detailed consideration to Section~\ref{sec:intro:rel}.}
These impressive results rely strongly on the fact that the cost $\|x - y\|$ appearing in the definition of $W_1$ is a metric.


In contrast to these results for $W_1$, the situation for $W_2$ -- also known as the \emph{root mean square} (RMS) distance or the transportation metric with quadratic cost -- is much less complete. Breakthroughs due to \citet{PhiAga06} (for the $\RR^2$ case) and \citet{AgaSha14} (for the general case) showed that this quantity can be approximated in time $\Otilde(n^{3/2})$, but no better results are known. This lack of progress is partially explained by the fact that the cost $\|x - y\|^2$ appearing in the definition of~$W_2$ is not a metric.
Moreover, strong evidence was given for the difficulty of approximating $W_2$ by~\citet{AndNaoNei16}, who showed that, unlike the earth mover distance, the quadratic transportation metric cannot be effectively sketched. This impossibility result poses a fundamental obstacle to geometric algorithms for estimating $W_2$ in near-linear time, and raises the question of whether \emph{any} algorithm can achieve this goal.
\begin{question}
Can an approximation of $W_2$ be computed in time $\Otilde(n)$?
\end{question}
In this work, we show that the answer is yes.
We employ a radically different approach from prior work in the geometric algorithms community.
Our technique, based on entropic regularization, bypasses embedding and sketching and leverages instead a \emph{low-rank} approximation to the optimal coupling.

\subsection{Problem statement}\label{subsec:problem-statement}
Let $\cX := \{x_1, \dots, x_n\} \subseteq \RR^d$, and let $\p$ and $\q$ be two distributions supported on $\cX$, given as two vectors in the simplex $\Delta_n \defeq \{\lambda \in \RR^n: \lambda \geq 0, \sum_{i=1}^n \lambda_i = 1\}$.
We identify the set $\Coup$ of couplings with nonnegative $n \times n$ matrices whose rows and columns sum to $\p$ and $\q$, respectively. This set $\Coup$ is often called the transportation polytope. We assume for normalization purposes throughout that $\cX$ lies in the Euclidean ball of radius $1$ centered at the origin; since the diameter of $\cX$ can easily be estimated within a constant factor in $O(n)$ time, this can always be achieved by translating and rescaling.

The primary goal is to estimate the cost of an optimal matching with respect to the quadratic Euclidean cost.


\begin{problem}
Given $\cX$, $\p$, $\q$, and $\ep \in (0, 1)$, compute $\What$ satisfying
\begin{equation*}\label{prob:val}
|\What - W^2_2(\p, \q)| \leq \ep\,.
\end{equation*}
\end{problem}
Note that the elementary inequality $(a - b)^2 \leq |a^2 - b^2|$ for $a, b \geq 0$ implies that~$\sqrt{\What}$ provides a~$\sqrt \ep$ approximation to $W_2(\p, \q)$.
This additive guarantee implies a multiplicative guarantee if $W_2(\p, \q) = \Omega(1)$; however, we do not obtain a multiplicative guarantee when $W_2(\p, \q)$ is very small. We discuss prospects for obtaining a multiplicative guarantee in Section~\ref{sec:conclusion}.

We also consider the stronger goal of producing a near-optimal feasible coupling between $\p$ and $\q$.
\begin{problem}\label{prob:coup}
Given $\cX$, $\p$, $\q$, and $\ep \in (0, 1)$, compute $\Phat \in \Coup$ satisfying
\begin{equation*}
\sum_{i, j = 1}^n \Phat_{ij} \|x_i - x_j\|^2 \leq W^2_2(\p, \q) + \ep\,.
\end{equation*}
\end{problem}
A solution $\hat P$ to Problem~\ref{prob:coup} clearly yields a solution to Problem~\ref{prob:val}, as long as the cost $\What = \sum_{i, j = 1}^n \Phat_{ij} \|x_i - x_j\|^2$ can be computed quickly.
Note that since we are interested in algorithms with $o(n^2)$ running time, we will not be able to represent a solution $\hat P$ to Problem~\ref{prob:coup} explicitly, so we will focus on returning such a matrix in \emph{factored} form, with rank $r = o(n)$.

\subsection{Our results}
Our main result breaks the $\Otilde(n^{3/2})$ barrier for approximating the quadratic transportation cost and shows that Problems~\ref{prob:val} and~\ref{prob:coup} can be solved in near-linear time. 

\begin{theorem}[Informal, constants suppressed]
There exists a universal constant $C > 0$ and an algorithm that, given two distributions supported on $n$ points in $\RR^d$, can compute an additive $\ep$-approximation to the quadratic transportation cost in time
\begin{equation*}
\Otilde\left(\frac{n}{\eps^3} \left(\frac{C \log n}{\ep}\right)^{d}\right)\,.
\end{equation*}
Moreover, the algorithm also computes a feasible coupling (in factored form) achieving this approximation.
\end{theorem}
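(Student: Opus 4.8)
The plan is to bypass the transportation linear program entirely and instead solve its \emph{entropically regularized} counterpart, for which the optimal coupling has an explicit Gibbs form governed by a Gaussian kernel matrix. Concretely, for a regularization parameter $\eta > 0$ I would study
\[
\Obj \defeq \min_{\gamma \in \Coup} \Big( \sum_{i,j} \gamma_{ij}\|x_i - x_j\|^2 - \eta \ent(\gamma)\Big),
\]
where $\ent$ is the Shannon entropy. Since $\ent(\gamma) \in [0, 2\log n]$ for any $\gamma \in \Coup$, the regularized value satisfies $|\Obj - W_2^2(\p,\q)| \le 2\eta \log n$, so it suffices to take $\eta \asymp \eps / \log n$ to control the regularization bias by $\eps$. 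The first-order optimality conditions force the optimal $\gamma$ to factor as $\gamma = \mathrm{diag}(u)\, K\, \mathrm{diag}(v)$, where $K_{ij} = \exp(-\|x_i - x_j\|^2/\eta)$ is exactly a \emph{Gaussian kernel matrix}; the vectors $u, v$ are recovered by Sinkhorn's matrix-scaling iteration, each step of which is a matrix--vector product against $K$.

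The engine that makes this near-linear is a low-rank factorization of $K$. Writing $\|x_i - x_j\|^2 = \|x_i\|^2 - 2\langle x_i, x_j\rangle + \|x_j\|^2$ peels off the two squared-norm terms as left/right diagonal rescalings (which can be folded into $u$ and $v$), leaving the matrix $M_{ij} = \exp(2\langle x_i, x_j\rangle/\eta)$. I would approximate $M$ by truncating the power series of the exponential of the inner product at degree $m$; since $\langle x_i, x_j\rangle \in [-1,1]$ and $\eta \asymp \eps/\log n$, choosing $m \asymp \log n / \eps$ (plus a logarithmic overhead) drives the entrywise approximation error below the Sinkhorn tolerance. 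Each truncated monomial is separable in $i$ and $j$, so the degree-$m$ approximation has rank equal to the number of $d$-variate monomials of degree at most $m$, namely $\binom{m+d}{d} = O((C\log n/\eps)^d)$. One subtlety I would address up front: a truncated exponential series can take negative values when its argument is negative, which breaks the positivity Sinkhorn requires; I would fix this by expanding $\exp((2\langle x_i,x_j\rangle + 2)/\eta)$ instead, whose argument is nonnegative so that every truncation stays positive, and absorbing the global factor $e^{-2/\eta}$ into the scalings. The resulting factorization $\tilde K = \Phi\Phi^\top$ with $\Phi \in \RR^{n\times r}$, $r = O((C\log n/\eps)^d)$, lets each Sinkhorn matrix--vector product run in time $O(nr)$.

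With the factored kernel in hand, I would run an \emph{approximate} Sinkhorn iteration using $\tilde K$ in place of $K$. Taking $\eta \asymp \eps/\log n$, the standard matrix-scaling analysis (via a KL/Pinsker potential argument) bounds the number of iterations needed to reach an $O(\eps)$-accurate coupling by $\Otilde(1/\eps^3)$, and each iteration now costs only $O(nr)$ rather than $O(n^2)$; multiplying these gives the claimed $\Otilde\big(\tfrac{n}{\eps^3}(C\log n/\eps)^d\big)$ runtime. Because $\tilde K = \Phi\Phi^\top$ is low rank, the scaled iterate $\mathrm{diag}(u)\tilde K\,\mathrm{diag}(v) = (\mathrm{diag}(u)\Phi)(\mathrm{diag}(v)\Phi)^\top$ stays rank $r$ throughout, and the standard rounding that projects an almost-feasible matrix onto $\Coup$ only rescales rows and columns and adds a single rank-one correction, so the returned coupling $\Phat$ remains in factored form of rank $r+1$. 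Finally, the transport cost $\sum_{ij}\Phat_{ij}\|x_i - x_j\|^2$ is read off in $O(nr + nd)$ time from the factors, since the two squared-norm contributions reduce to the prescribed marginals $\p,\q$ and the cross term equals $-2\,\Tr(X^\top \Phat X)$ with $X$ the data matrix.

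The main obstacle is the error bookkeeping across this chain of approximations. I must show that replacing $K$ by a positive, low-rank $\tilde K$ perturbs neither the convergence guarantee nor the final cost bound: concretely, an approximate-Sinkhorn argument treating the inexact matrix--vector products as a controlled multiplicative perturbation, together with a simultaneous balancing of three error budgets---the $O(\eta\log n)$ regularization bias, the degree-$m$ truncation error of the Gaussian kernel, and the Sinkhorn marginal-violation tolerance---so that each is $O(\eps)$ while $m = O(\log n/\eps)$ and the iteration count stays $\Otilde(1/\eps^3)$. Propagating the \emph{entrywise} kernel error through the rounding step into an \emph{additive} error on the transport value (rather than merely on the dual potentials), and verifying that the shift ensuring positivity does not inflate the required rank, are the most delicate points.
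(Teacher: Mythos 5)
Your proposal follows essentially the same route as the paper: entropic regularization with entropy coefficient $\asymp \eps/\log n$, a Taylor-truncated rank-$O\big((C\log n/\eps)^d\big)$ factorization of the resulting Gaussian kernel, Sinkhorn scaling with $O(nr)$-cost matrix--vector products, rounding to $\Coup$ with a rank-one correction, and low-rank evaluation of the cost via $C = y\bone^\top + \bone y^\top - 2X^\top X$, with the same three-way error budget. The two points where you deviate or hesitate are both handled more simply in the paper: positivity of the truncated kernel needs no exponent shift because the truncation accuracy already forces $\tilde K_{ij} \ge \tfrac{1}{2}e^{-\eta} > 0$ (since $\|K - \tilde K\|_\infty \le \tfrac{1}{2}e^{-\eta}$ while $K_{ij} \ge e^{-\eta}$), and the ``delicate'' propagation of entrywise kernel error into additive transport error is dispatched by reinterpreting $\tilde K$ as the \emph{exact} kernel of a perturbed cost $\tilde C := \eta^{-1}\log \tilde K$ with $\|C - \tilde C\|_\infty \le \eps/10$, after which H\"older's inequality and the entropy bound close the argument.
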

The explicit version appears as Theorem~\ref{thm:main}, below.

\subsection{Our techniques}
Our algorithm is simple.
We leverage the technique of \emph{entropic regularization}, which has been used in the machine learning literature to approximate the optimal transportation between two distributions with arbitrary costs~\citep{Cut13}, and has yielded algorithms which are fast both in theory and in practice~\citep{AltWeeRig17,GenCutPey16,DvuGasKro18,SolGoePey15,PeyCut17}.

The entropic regularization approach is based on solving the program
\begin{equation}\label{eq:regularized}
\min_{P \in \Coup} \langle C, P \rangle - \eta^{-1} H(P)\,,
\end{equation}
for a carefully chosen regularization parameter $\eta$, where $C_{ij} := \|x_i - x_j\|^2$ for $i, j \in [n]$ and $H(P) := \sum_{ij} P_{ij} \log \frac{1}{P_{ij}}$ is the entrywise Shannon entropy of the matrix $P$.
This approach was popularized by~\citet{Cut13}, though similar ideas date back to~\citet{Wil69}.

The benefit of this technique is that the solution to the regularized program~\eqref{eq:regularized} can be characterized explicitly.
\begin{fact}[{\citet[Lemma~2]{Cut13}}]\label{fact:kkt}
The minimizer of~\eqref{eq:regularized} is the \emph{unique} matrix in $\cM(\p, \q)$ of the form $D_1 K D_2$ for positive matrices $D_1$ and $D_2$, where $K_{ij} := e^{- \eta C_{ij}}$ for~$i, j \in [n]$.
\end{fact}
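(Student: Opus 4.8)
The plan is to verify the standard Lagrangian/KKT characterization for the strictly convex program~\eqref{eq:regularized}. First I would establish existence and uniqueness of the minimizer. The feasible set $\Coup$ is a nonempty compact convex polytope, and the objective $P \mapsto \langle C, P \rangle - \eta^{-1} H(P)$ is continuous and \emph{strictly} convex on $\Coup$, since the negative entropy $-H(P) = \sum_{ij} P_{ij} \log P_{ij}$ is separable with each summand strictly convex on $[0,\infty)$ (the second derivative in $P_{ij}$ is $1/P_{ij} > 0$). Strict convexity together with compactness yields a unique minimizer $P^\star$. Here, and throughout, one may assume $\p, \q$ have full support, after discarding any zero coordinates.

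The key structural step is to show that $P^\star$ lies in the relative interior of $\Coup$, i.e. that all its entries are strictly positive. For this I would exploit the fact that the derivative of $x \mapsto \eta^{-1} x \log x$ tends to $-\infty$ as $x \downarrow 0$: if some entry $P^\star_{ij}$ vanished, then pushing a small amount of mass into coordinate $(i,j)$ along a feasible direction (which exists because $\Coup$ contains fully supported matrices when $\p, \q > 0$) would decrease the objective at an unbounded initial rate, contradicting optimality. I expect this to be the main obstacle, since it is precisely what guarantees that none of the inequality constraints $P_{ij} \geq 0$ is active, and hence that the only Lagrange multipliers are the $2n$ multipliers attached to the marginal constraints.

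With $P^\star$ in the relative interior, I would form the Lagrangian with multipliers $\alpha \in \RR^n$ and $\beta \in \RR^n$ for the row and column marginal constraints and impose stationarity $\partial L / \partial P_{ij} = 0$. This gives $C_{ij} + \eta^{-1}(\log P^\star_{ij} + 1) - \alpha_i - \beta_j = 0$, which rearranges to $P^\star_{ij} = u_i\, e^{-\eta C_{ij}}\, v_j$ with $u_i = e^{\eta \alpha_i - 1/2} > 0$ and $v_j = e^{\eta \beta_j - 1/2} > 0$. Recognizing $e^{-\eta C_{ij}} = K_{ij}$, this is exactly $P^\star = D_1 K D_2$ for the positive diagonal matrices $D_1 = \mathrm{diag}(u)$ and $D_2 = \mathrm{diag}(v)$.

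Finally, for the uniqueness claim I would run the convexity argument in reverse. Since~\eqref{eq:regularized} is convex, the stationarity (KKT) conditions are \emph{sufficient} for global optimality, so any matrix of the form $D_1 K D_2$ that happens to lie in $\Coup$ automatically satisfies them and is therefore a global minimizer. Combined with the uniqueness of the minimizer from the first step, this forces every such matrix to coincide with $P^\star$, establishing that $P^\star$ is the unique element of $\Coup$ of that form.
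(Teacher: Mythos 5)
Your proposal is correct; the paper itself gives no proof of this statement, citing it directly as Lemma~2 of \citet{Cut13}, and your Lagrangian/KKT argument is essentially the standard proof given there. You correctly identify and handle the one genuinely delicate step---showing the minimizer has strictly positive entries via the unbounded derivative of $x \mapsto x\log x$ at $0$ so that only the affine marginal constraints carry multipliers---and your reverse direction (KKT sufficiency by convexity plus uniqueness of the minimizer) cleanly delivers the uniqueness claim.
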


Finding matrices $D_1$ and $D_2$ for which $D_1 K D_2 \in \Coup$ is an instance of what is known as the \emph{matrix scaling problem}, which has a long history in the optimization and computer science literature~\citep{LinSamWig00,Gur03,RotSch89}. Several recent works have shown that the matrix scaling problem can be solved in $\Otilde(n^2)$ time by second-order methods~\citep{CohMpoTsi17,AllLiOli17}, and this quadratic dependence on $n$ is not improvable for general matrices $K$.

In this work, we adopt an older and simpler approach to the matrix scaling problem, known as Sinkhorn scaling (after~\citet{Sin67}) or the RAS method~\citep{Bac65}. Sinkhorn scaling is a simple alternating minimization scheme, which alternates between renormalizing the rows and columns of $K$ so that they have the desired marginals $\p$ and $\q$, respectively. For completeness, pseudocode for this procedure is provided in Appendix~\ref{app:sink}. This algorithm iteratively builds positive diagonal matrices $D_1$ and $D_2$ such that $D_1 K D_2$ converges to an element of $\Coup$. This method has also been shown to solve the scaling problem in time $\Otilde(n^2)$, albeit with polynomial (rather than logarithmic) dependence on the desired precision:

\begin{fact}[{\citet[Theorem~2]{AltWeeRig17}, \citet[Theorem~1]{DvuGasKro18}}]\label{fact:sink_run}
Given a positive matrix $K \in \RR^{n \times n}$, Sinkhorn scaling (Algorithm~\ref{alg:sinkhorn}) computes diagonal matrices $\tilde D_1$ and $\tilde D_2$ such that $\tilde P \defeq \tilde D_1 K \tilde D_2$ satisfies $\|\tilde P \bone - \p\|_1 + \|\tilde P^\top \bone - \q\|_1 \leq \delta$ in $O(\delta^{-1} \log \frac{n}{\delta \min_{ij} K_{ij}})$ iterations.
Moreover, the entries of $\tilde D_1$ and $\tilde D_2$ are polynomially bounded in $\delta$, $\min_{ij} K_{ij}$, and $n$ throughout the execution of the algorithm.
\end{fact}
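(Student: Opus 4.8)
The plan is to recognize Sinkhorn scaling as exact alternating minimization of a smooth convex dual potential, to quantify the per-iteration progress via a Kullback--Leibler divergence, and to bound the number of iterations by controlling the total range of that potential.

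First I would introduce the dual potential. Writing $\tilde D_1 = \mathrm{diag}(e^{u})$ and $\tilde D_2 = \mathrm{diag}(e^{v})$ for $u, v \in \RR^n$, set
\begin{equation*}
\Phi(u, v) \defeq \sum_{i,j} K_{ij} e^{u_i + v_j} - \langle u, \p \rangle - \langle v, \q \rangle .
\end{equation*}
A direct computation shows $\nabla_u \Phi(u,v) = r - \p$ and $\nabla_v \Phi(u,v) = c - \q$, where $r$ and $c$ denote the row and column marginals of $\tilde D_1 K \tilde D_2$. Hence a row rescaling---which sets $e^{u_i} = \p_i / (K e^{v})_i$---is precisely the exact minimization of $\Phi$ over $u$ with $v$ held fixed, and likewise for columns, so that Sinkhorn scaling is block coordinate descent on $\Phi$.

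Next comes the key progress estimate. Suppose the last step was a column rescaling, so that $c = \q$ exactly and the current row marginal is $r$; since the columns are normalized, $\bone^\top r = \bone^\top \q = 1$, so $r$ is itself a probability vector. I would then compute in closed form that the subsequent row rescaling decreases the potential by exactly $\Phi_{\mathrm{old}} - \Phi_{\mathrm{new}} = \KL{\p}{r}$. Pinsker's inequality gives $\KL{\p}{r} \geq \tfrac12 \| \p - r \|_1^2$, and an identical identity holds for column steps. Thus, as long as the total marginal violation $\|r - \p\|_1 + \|c - \q\|_1$ exceeds $\delta$, every iteration decreases $\Phi$ by $\Omega(\delta^2)$.

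It remains to bound the total achievable decrease. The potential $\Phi$ is bounded below by its value at the (feasible) Sinkhorn limit, and its value along the trajectory is controlled using $K_{ij} \in [\min_{ij} K_{ij},\, 1]$ together with the normalization $\p, \q \in \Delta_n$; this bounds the range by $R = O(\log \tfrac{n}{\delta \min_{ij} K_{ij}})$. Summing the per-step decreases yields an $O(R/\delta^2)$ iteration bound, and obtaining the sharper $O(R/\delta)$ rate of~\citet{DvuGasKro18} requires replacing the crude Pinsker step by their refined telescoping of the successive marginal errors, which I expect to be the main technical subtlety. Finally, for the claimed polynomial bounds on the entries of $\tilde D_1, \tilde D_2$, I would track the $\ell_\infty$ oscillation of the dual iterates: because each rescaling divides by a vector $(Ke^v)_i$ whose entries lie within a factor $\min_{ij} K_{ij}$ of one another, the oscillation of $u$ (resp.\ $v$) stays bounded in terms of $\log \tfrac{1}{\min_{ij}K_{ij}}$ and the oscillation of $\log \p$ (resp.\ $\log \q$); anchoring by the total-mass normalization then converts this into absolute, polynomially bounded values of $e^{u_i}$ and $e^{v_j}$ throughout the run.
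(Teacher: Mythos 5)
The paper does not actually prove this statement: it is imported as a Fact, with citations to \citet[Theorem~2]{AltWeeRig17} and \citet[Theorem~1]{DvuGasKro18}, so there is no internal proof to compare against. Your reconstruction follows exactly the strategy of those references: Sinkhorn as exact block-coordinate descent on the dual potential $\Phi$, the identity that a row step following a column step decreases $\Phi$ by exactly $\KL{\p}{r}$ (your observation that $\bone^\top r = 1$ after a column step, so that $r$ is a probability vector and Pinsker applies, is the right one), and a bound on the total range of $\Phi$. That is the correct skeleton, and it is essentially the proof of \citet[Theorem~2]{AltWeeRig17}.

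The gap is quantitative. Pinsker yields a per-iteration decrease of $\Omega(\delta^2)$ and hence $O(R/\delta^2)$ iterations, whereas the Fact claims $O(R/\delta)$ with $R = O(\log\tfrac{n}{\delta \min_{ij}K_{ij}})$. You explicitly defer this improvement to the ``refined telescoping'' of \citet{DvuGasKro18}, but that refinement is precisely the content of the $\delta^{-1}$ claim, so as written the proposal establishes a strictly weaker statement than the one asserted. (For this paper's purposes the weaker $\delta^{-2}$ bound would still give an $\Otilde(n)$ algorithm, only with worse $\ep$-dependence in Theorem~\ref{thm:main}, but it does not prove the Fact.) A second point to make explicit: both the range bound $R = O(\log\tfrac{n}{\delta\min_{ij}K_{ij}})$ and the polynomial bounds on the entries of $\tilde D_1,\tilde D_2$ require the target marginals to be bounded below, which is exactly why Algorithm~\ref{alg:sinkhorn} first replaces $\p,\q$ by $(1-\tau)\p + \tfrac{\tau}{n}\bone$ with $\tau = \delta/8$. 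Your sketch uses the resulting bounds (the $\delta$ inside the logarithm, the control of the oscillation of $\log\p$) without deriving them from this preprocessing step; without it, e.g.\ when some $\p_i = 0$, the dual iterates need not be polynomially bounded at all.
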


The benefit of using this algorithm rather than a second-order method is that Sinkhorn scaling can be implemented such that each iteration requires $O(1)$ matrix-vector products with $K$. The simple but key observation is that although computing these products takes $\Theta(n^2)$ time for general positive matrices, if $K$ can be written in factored form as $V^\top V$ where $V \in \RR^{r \times n}$ for $r = o(n)$, then each iteration takes time $O(rn) = o(n^2)$.
To exploit this property, we rely on the fact that when $C_{ij} = \|x_i - x_j\|^2$, the matrix $K$ is a \emph{Gaussian kernel matrix}, with entries of the form $e^{-\eta \|x_i - x_j\|^2}$. We then appeal to the fact that Gaussian kernel matrices can be efficiently approximated by matrices of very low rank, with $r = \Otilde(1)$.

The main technical step is to show that Sinkhorn scaling can be used on this approximate matrix to produce a suitably good estimate of the optimal coupling. We then round the matrix obtained from approximate Sinkhorn scaling to the transport polytope and return the resulting matrix $\hat P$ and its cost $\hat W = \sum_{ij} \hat P_{ij} \|x_i - x_j\|^2$. Once we establish that each step can be implemented in time $O(nr) = \Otilde(n)$, the proof is complete.

\subsection{Related work}\label{sec:intro:rel}
Approximating the Wasserstein distance between discrete distributions is a fundamental problem in optimization. In particular, since it is a special case of the transportation problem, it has been the focus of a great deal of work in the combinatorial optimization community. (See~\citep[Chapter~21]{Sch03} and references therein.)
General optimal transportation problems are linear programs and can be solved in time $\Otilde(n^{2.5})$ by interior point techniques~\citep{LeeSid14}, or in time $\Otilde(n^3)$ by simple combinatorial methods~\citep{Orl93}.
In general, no algorithm for the optimal transport problem can run in time $o(n^2)$ without additional structural assumptions, since the matrix $(C_{ij})$ encoding the costs between locations has size $\Theta(n^2)$. \citet{AltWeeRig17} showed that this goal is (nearly) achievable by exhibiting an algorithm that obtained an additive $\ep$ approximation to the optimal transport cost in time $\Otilde(n^2 \ep^{-3})$, which has been improved to $\Otilde(n^2 \ep^{-1})$ in subsequent work~\citep{DvuGasKro18,quanrud,BlaJamCarSid18}. These results hold for any nonnegative cost matrix $C$ satisfying $\max_{ij} C_{ij} = O(1)$.

If the cost $C$ is metric, it is possible to obtain near-linear time algorithms. When $C_{ij}$ is an $\ell_p$ norm on $\RR^d$---which includes the $W_1$ case---\citet{ShaAga12} show that a $(1+\ep)$ multiplicative approximation can be obtained in $O(n \, \poly(\log n, \ep^{-1}))$ time in the special case where $\p$ and $\q$ are uniform distributions on $n$ points. \citet{AndNikOna14} gives an algorithm that can obtain an estimate of the $W_1$ distance between any two distributions in $\RR^2$ in time $O(n^{1 + o_\ep(1)})$, though this algorithm does not return a coupling. Both algorithms incur an exponential dependence on $d$ in the high-dimensional setting.

For the quadratic Euclidean cost---the $W_2$ case---no near-linear time algorithms are currently known. The first subquadratic algorithm was given by~\citet{PhiAga06}, who again consider the special case where $\p$ and $\q$ are uniform distributions on $n$ points in $\RR^2$. They obtain a $(1+\ep)$ multiplicative approximation in time $\Otilde(n^{3/2} \ep^{-3/2})$. This was extended to $\RR^d$ by~\citet{AgaSha14} (though still under the condition that $\p$ and $\q$ are uniform distributions), who obtain a $(1+\ep)$ multiplicative approximation in $\Otilde(n^{3/2} \ep^{-1} \tau(n, \ep))$ time, where $\tau(n, \ep)$ is the time required to query an $O(\ep)$-approximate nearest neighbor data structure for $\ell_2$. Designing such data structures in high dimensions is a delicate matter~\citep{AndIndRaz18}, but if we require that query time be polylogarithmic in $n$ and that the data structure take $n^{O(1)}$ time to build, the best results incur dependence of order $O(d/\ep)^d$ on the dimension~\citep{AryMouNet98}.
No algorithms running in time $o(n^{3/2})$ are known, even for the case of uniform distributions in $\RR^2$.

The idea of low-rank approximation to kernel matrices, which we exploit here, is common in machine learning~\citep{BacJor03,SchSmo01,FinSch01,RahRec07}. The use of low-dimensional couplings in optimal transport has been proposed for statistical purposes~\citep{SpaBigMar17,ForHutNit18}, but, to our knowledge, has not been explored from an optimization perspective.

\paragraph*{Acknowledgements.} We thank Pablo Parrilo and Philippe Rigollet for helpful discussions.

\section{Preliminaries}\label{sec:prelim}

First, some notation. 
We use the matrix norms $\|A\|_{\infty} := \max_{ij} |A_{ij}|$ and $\|A\|_1 := \sum_{ij} |A_{ij}|$. For vectors, the notation $\|\cdot\|$ refers to the $\ell_2$ norm. We write $\ball$ to denote the unit $\ell_2$ ball around the origin in dimension $d$, $\bone$ to denote the all-ones vector of dimension $n$, and $[k]$ to denote $\{1, \dots, k\}$ for positive integers $k$. The Frobenius inner product between matrices is denoted by $\langle A, B \rangle$. The notation $\Otilde(\cdot)$ hides factors of the form $\log^{O(1)} n \log^{O(1)}(1/\ep)$.

\subsection{Low-rank approximation of Gaussian kernel matrices}\label{subsec:prelim:taylor}

For the quadratic transport cost, the cost matrix $C_{ij} := \|x_i - x_j\|^2$ corresponds to the squared Euclidean distance, and thus the entrywise-exponentiated matrix $K_{ij} := \exp(-\eta C_{ij})$ in Fact~\ref{fact:kkt} is a Gaussian kernel matrix (where $\eta = \tfrac{1}{2\sigma^2}$).

\begin{defin}
The \emph{Gaussian kernel matrix} corresponding to points $x_1, \dots, x_n \in \RR^d$ and bandwith parameter $\sigma > 0$ is the matrix $K := K(x_1,\dots,x_n;\sigma) \in \RR^{n \times n}$ with entries $K_{ij} := \exp(-\tfrac{\|x_i - x_j\|^2}{2\sigma^2})$.
\end{defin}

Details about Gaussian kernel matrices can be found in, e.g.,~\citep{SteChr08, cotter2011explicit}.
Of critical importance to our approach is the fact that the spectrum of Gaussian kernel matrices decays exponentially fast, since this enables the approximation of Gaussian kernel matrices by low-rank matrices. We will use throughout the paper an explicit low-rank approximation obtained by truncating the Taylor expansion of the Gaussian kernel.

\begin{lemma}[{\citet{cotter2011explicit}}]\label{lem:taylor-gaussian}
There is a procedure $\TaylorGKM$ that, given any $x_1, \dots, x_n \in \ball$, $\sigma > 0$, and $M \in \mathbb{N}$, takes time $O(n\binom{M-1+d}{M-1})$ to output a matrix $V_M \in \RR^{\binom{M-1+d}{M-1} \times n}$ satisfying
\begin{align*}
\left\|K(x_1,\dots,x_n;\sigma) - V_M^TV_M\right\|_{\infty}
\leq
\frac{1}{M!\, \sigma^{2M}}.
\end{align*}
\end{lemma}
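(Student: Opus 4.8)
The plan is to construct $V_M$ explicitly from a truncated-Taylor feature map obtained by expanding the Gaussian kernel around its inner-product structure, and then to bound the entrywise error by a Lagrange-remainder estimate. First I would write the squared distance as $\|x_i - x_j\|^2 = \|x_i\|^2 - 2\scal{x_i}{x_j} + \|x_j\|^2$, so that
\begin{equation*}
K_{ij} = e^{-\|x_i\|^2/(2\sigma^2)}\, e^{\scal{x_i}{x_j}/\sigma^2}\, e^{-\|x_j\|^2/(2\sigma^2)}.
\end{equation*}
Writing $a_i \defeq \|x_i\|^2/(2\sigma^2) \ge 0$ and $t_{ij} \defeq \scal{x_i}{x_j}/\sigma^2$, this factors as $K_{ij} = e^{-a_i - a_j}\, e^{t_{ij}}$. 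The only non-separable factor is $e^{t_{ij}}$, which I would replace by its degree-$(M-1)$ Taylor polynomial in order to obtain a low-rank, separable approximation.

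Second, I would turn this truncated polynomial into an inner product of explicit feature vectors. Expanding $\scal{x_i}{x_j}^k = (\sum_l x_i^{(l)} x_j^{(l)})^k$ by the multinomial theorem and summing $k$ from $0$ to $M-1$ gives
\begin{equation*}
\sum_{k=0}^{M-1} \frac{t_{ij}^k}{k!} = \sum_{|\alpha| \le M-1} \frac{1}{|\alpha|!\, \sigma^{2|\alpha|}} \binom{|\alpha|}{\alpha} \prod_{l=1}^d (x_i^{(l)})^{\alpha_l} (x_j^{(l)})^{\alpha_l} = \scal{\psi(x_i)}{\psi(x_j)},
\end{equation*}
where $\psi$ is indexed by multi-indices $\alpha \in \NN^d$ with $|\alpha| \le M-1$ and has entries $\psi(x)_\alpha = (|\alpha|!\, \sigma^{2|\alpha|})^{-1/2}\binom{|\alpha|}{\alpha}^{1/2}\prod_l (x^{(l)})^{\alpha_l}$. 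Folding the prefactor $e^{-a_i}$ into the $i$-th column defines $V_M$, so that $(V_M^T V_M)_{ij} = e^{-a_i - a_j}\scal{\psi(x_i)}{\psi(x_j)}$ is exactly the truncated approximation of $K_{ij}$. The number of rows is the number of such multi-indices, $\sum_{k=0}^{M-1}\binom{k+d-1}{d-1} = \binom{M-1+d}{M-1}$ by the hockey-stick identity, matching the claimed dimension; and each of the $n$ columns can be filled in $O(\binom{M-1+d}{M-1})$ time by computing the monomials incrementally, which yields the stated runtime.

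Third, I would control the entrywise error, which equals $e^{-a_i-a_j}$ times the Taylor remainder $R_M(t_{ij}) = e^{\xi} t_{ij}^M / M!$ for some $\xi$ between $0$ and $t_{ij}$. Since $x_i, x_j \in \ball$, Cauchy--Schwarz gives $|t_{ij}| \le \|x_i\|\|x_j\|/\sigma^2 \le 1/\sigma^2$, so $|t_{ij}|^M \le \sigma^{-2M}$. The step I expect to be the crux is the factor $e^\xi$, which can exceed $1$ when $t_{ij} > 0$; here the Gaussian prefactor must exactly compensate it. When $t_{ij} \ge 0$ we have $\xi \le t_{ij} \le \|x_i\|\|x_j\|/\sigma^2 \le a_i + a_j$ by the AM--GM inequality, so $e^{-a_i - a_j}e^{\xi} \le 1$; when $t_{ij} < 0$ we have $\xi \le 0$ and $e^{-a_i-a_j} \le 1$. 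In either case $|K_{ij} - (V_M^T V_M)_{ij}| \le \sigma^{-2M}/M!$, and taking the maximum over $i,j$ gives the claimed $\|\cdot\|_\infty$ bound.
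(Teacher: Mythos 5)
Your construction is exactly the one the paper sketches in Appendix~\ref{app:taylor} (truncate the Taylor series of $e^{\langle x,y\rangle/\sigma^2}$, expand into monomial features indexed by multi-indices of degree at most $M-1$, and fold in the Gaussian prefactors), and your dimension count and runtime match. The one detail the paper delegates entirely to \citet{cotter2011explicit}---controlling the $e^{\xi}$ factor in the Lagrange remainder by absorbing it into $e^{-a_i-a_j}$ via AM--GM---is handled correctly in your argument, so the proposal is complete and correct.
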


We write $V_M \gets \TaylorGKM(x_1,\dots,x_n;\sigma,M)$ to denote the procedure generating this low-rank factorization; for completeness, this is briefly described in Appendix~\ref{app:taylor}.
We will choose $M = O(\ep^{-1} \log n)$, so that the approximation $V_M^TV_M$ has rank $O(\ep^{-1} \log n)^d$.

\subsection{Rounding to transport polytope}
It will be helpful to record the following simple guarantee about the rounding algorithm $\textsc{Round}$ due to~\citet{AltWeeRig17}. The performance guarantee is a slight variation of~\citep[Lemma~7]{AltWeeRig17}, and the runtime guarantee is immediate by definition of the algorithm. For completeness, Appendix~\ref{app:round} includes a short proof of this as well as pseudocode for the $\textsc{Round}$ procedure.

\begin{lemma}\label{lem:round-alg}
Given $\p,\q \in \Delta_n$ and $F \in \RR_{\geq 0}^{n \times n}$ satisfying $\|F\|_1 = 1$, $\textsc{Round}(F, \p, \q)$ outputs a matrix $G \in \Coup$ of the form $G = D_1 F D_2 + uv^\top$ for positive diagonal matrices $D_1$ and $D_2$ satisfying
\[
\|G - F\|_1 \leq \|F\bone - \p\|_1 + \|F^T\bone - \q\|_1.
\]
Moreover, the algorithm only uses $O(1)$ matrix-vector products with $F$ and $O(n)$ additional processing time.
\end{lemma}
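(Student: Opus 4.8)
The plan is to track the three stages of $\textsc{Round}$ in sequence: a row rescaling, a column rescaling, and a rank-one correction. Concretely, the procedure first forms $F' = D_1 F$, where $D_1$ is the positive diagonal matrix with entries $\min(1, \p_i/(F\bone)_i)$, so that every row sum of $F'$ equals $\min((F\bone)_i, \p_i) \le \p_i$; it then forms $F'' = D_1 F D_2$, where $D_2$ is diagonal with entries $\min(1, \q_j/(F'^\top\bone)_j)$, so that every column sum of $F''$ is at most $\q_j$ and (since $D_2$ has entries in $[0,1]$) every row sum is still at most $\p_i$. Writing $\errr := \p - F''\bone$ and $\errc := \q - F''^\top\bone$ for the residual marginal discrepancies, the output is $G := F'' + \errr\,\errc^\top/\|\errr\|_1$, which is exactly of the claimed form $D_1 F D_2 + uv^\top$.

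First I would establish feasibility. Nonnegativity of $G$ is immediate since $F'' \ge 0$ and $\errr, \errc \ge 0$ by construction. The key point is that the rank-one term repairs both marginals at once: because $\|\p\|_1 = \|\q\|_1 = \|F\|_1 = 1$ and the row and column sums of any nonnegative matrix have equal total mass, one gets $\|\errr\|_1 = 1 - \|F''\|_1 = \|\errc\|_1$. Hence the row sums of $\errr\errc^\top/\|\errr\|_1$ equal $\errr\,\|\errc\|_1/\|\errr\|_1 = \errr$ and its column sums equal $\errc$, so $G\bone = F''\bone + \errr = \p$ and $G^\top\bone = F''^\top\bone + \errc = \q$, i.e.\ $G \in \Coup$. (In the degenerate case $\|\errr\|_1 = 0$, the matrix $F''$ already lies in $\Coup$ and the correction is omitted.)

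Next I would prove the $\ell_1$ bound. Since every diagonal entry of $D_1$ and $D_2$ lies in $[0,1]$, we have $F'' \le F$ entrywise, so $\|F - F''\|_1 = \|F\|_1 - \|F''\|_1 = 1 - \|F''\|_1$, while the rank-one term has $\ell_1$ mass $\|\errr\|_1\|\errc\|_1/\|\errr\|_1 = 1 - \|F''\|_1$; the triangle inequality then gives $\|G - F\|_1 \le 2(1 - \|F''\|_1)$. It remains to bound $1 - \|F''\|_1$ by telescoping the mass lost at each stage. The row rescaling removes mass $\|F\|_1 - \|F'\|_1 = \sum_i ((F\bone)_i - \p_i)_+$, and the elementary identity $\sum_i(a_i - b_i)_+ = \tfrac12\|a - b\|_1$ (valid whenever $\sum_i a_i = \sum_i b_i$, applied to $a = F\bone$, $b = \p$) turns this into $\tfrac12\|F\bone - \p\|_1$. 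The column rescaling removes mass $\sum_j ((F'^\top\bone)_j - \q_j)_+$, which I would bound using the monotonicity $(F'^\top\bone)_j \le (F^\top\bone)_j$ (row rescaling only shrinks column sums) together with the same identity applied to $F^\top\bone$ and $\q$, giving the upper bound $\tfrac12\|F^\top\bone - \q\|_1$. Adding the two contributions yields $1 - \|F''\|_1 \le \tfrac12(\|F\bone - \p\|_1 + \|F^\top\bone - \q\|_1)$, and combining with the factor of $2$ produces exactly the stated bound.

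Finally, the runtime is bookkeeping: computing $F\bone$, then $F'^\top\bone = F^\top(D_1\bone)$, then $F''\bone = D_1(F(D_2\bone))$ uses three matrix-vector products with $F$ (or $F^\top$), and the column sums of $F''$ reuse $F^\top(D_1\bone)$; every remaining operation is $O(n)$, and $G$ is never instantiated but returned in factored form as $(D_1, D_2, u, v)$ with $u = \errr/\|\errr\|_1$ and $v = \errc$. I expect the only genuine obstacle to be the second half of the $\ell_1$ argument---charging the column-rescaling loss to the \emph{original} marginal $F^\top\bone$ rather than to the intermediate $F'^\top\bone$. This is precisely where the monotonicity $(F'^\top\bone)_j \le (F^\top\bone)_j$ enters, and it is the sole inequality in an otherwise exact chain, which is what turns the final guarantee into an inequality.
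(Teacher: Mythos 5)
Your proposal is correct and follows essentially the same argument as the paper: both bound $\|G-F\|_1$ by twice the mass $\Delta = \|F\|_1 - \|F''\|_1$ removed during the two rescaling stages, compute the row-stage loss exactly as $\tfrac12\|F\bone-\p\|_1$ via the identity $\sum_i(a_i-b_i)_+ = \tfrac12\|a-b\|_1$, and charge the column-stage loss to the original marginal using the entrywise monotonicity $F'\le F$. The only difference is that you additionally spell out the feasibility of $G$, which the paper's proof takes for granted.
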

In particular, if $F$ is given explicitly in factored form, then the output of $\textsc{Round}(F, \p, \q)$ also has an explicit factorization with at most one additional rank $1$ term.

\subsection{Computing the quadratic transportation cost of low-rank matrices}
Given a matrix $P \in \Coup$, computing the cost $\sum_{i, j = 1}^n P_{ij} \|x_i - x_j\|^2$ na\"ively takes $\Omega(n^2)$ time. However, if $P$ is given explicitly in low-rank form, this cost can be computed more quickly.

\begin{fact}
The matrix $C$ given by $C_{ij} = \|x_i - x_j\|^2$ satisfies
\begin{equation*}
C = y \bone^\top + \bone y^\top - 2 X^\top X\,,
\end{equation*}
where $X := [x_1 | \dots | x_n] \in \RR^{d \times n}$ and $y := [\|x_1\|^2, \dots, \|x_n\|^2]^\top \in \RR^n$.
\end{fact}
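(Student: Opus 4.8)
The plan is to verify the claimed matrix identity entrywise, since both sides are $n \times n$ matrices and equality of all entries is equivalent to equality of the matrices. First I would record the $(i,j)$ entry of each of the three terms on the right-hand side. Because $X = [x_1 | \dots | x_n]$ stores the points as its columns, the Gram matrix $X^\top X$ has $(i,j)$ entry $\langle x_i, x_j \rangle$. For the two outer products, multiplying $y$ by $\bone^\top$ broadcasts $y_i$ across the $i$th row, so $(y\bone^\top)_{ij} = y_i = \|x_i\|^2$, whereas multiplying $\bone$ by $y^\top$ broadcasts $y_j$ across the $j$th column, so $(\bone y^\top)_{ij} = y_j = \|x_j\|^2$.

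Combining these, the $(i,j)$ entry of the right-hand side equals $\|x_i\|^2 + \|x_j\|^2 - 2\langle x_i, x_j \rangle$, which by the standard expansion of the squared Euclidean norm is exactly $\|x_i - x_j\|^2 = C_{ij}$. Since this holds for every pair $(i,j)$, the matrix identity follows immediately. There is no genuine obstacle here: the only points requiring care are the row/column bookkeeping, so that $y\bone^\top$ (constant along each row) and $\bone y^\top$ (constant along each column) are not interchanged, and the convention that $X$ stores the $x_i$ as columns, which is what makes $X^\top X$ the $n \times n$ matrix of pairwise inner products. Once these conventions are fixed, the proof is a one-line expansion.
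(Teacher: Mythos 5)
Your proof is correct: the entrywise expansion $\|x_i - x_j\|^2 = \|x_i\|^2 + \|x_j\|^2 - 2\langle x_i, x_j\rangle$, together with the identifications $(y\bone^\top)_{ij} = \|x_i\|^2$, $(\bone y^\top)_{ij} = \|x_j\|^2$, and $(X^\top X)_{ij} = \langle x_i, x_j\rangle$, is exactly the standard argument. The paper states this fact without proof, treating it as immediate, so your verification is the intended one.
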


\begin{lemma}\label{lem:quad-cost}
If $P \in \Coup$ is given as $\sum_{k=1}^r v_k w_k^\top$, then $\sum_{i, j = 1}^n P_{ij} \|x_i - x_j\|^2$ can be computed in $O(ndr)$ time.
\end{lemma}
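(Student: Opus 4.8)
The plan is to exploit the decomposition $C = y\bone^\top + \bone y^\top - 2X^\top X$ from the preceding Fact, so that the desired cost $\langle C, P\rangle = \sum_{i,j} P_{ij}\|x_i - x_j\|^2$ splits into three pieces, each of which can be evaluated in $O(ndr)$ time using the factored form $P = \sum_{k=1}^r v_k w_k^\top$. First I would write
\begin{equation*}
\sum_{i,j=1}^n P_{ij}\|x_i - x_j\|^2 = \langle y\bone^\top, P\rangle + \langle \bone y^\top, P\rangle - 2\langle X^\top X, P\rangle,
\end{equation*}
and treat the three terms separately, tracking the arithmetic cost of each.

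For the first two terms, the key observation is that $\langle y\bone^\top, P\rangle = y^\top P \bone$ and $\langle \bone y^\top, P\rangle = \bone^\top P y$. Each is an inner product of $y$ with a marginal of $P$, and the marginals $P\bone = \sum_k v_k (w_k^\top \bone)$ and $P^\top \bone = \sum_k w_k (v_k^\top \bone)$ can be assembled from $r$ scalar products and $r$ scalar-vector scalings, i.e.\ in $O(nr)$ time. (Indeed, for $P \in \Coup$ these marginals are simply $\p$ and $\q$, but computing them directly from the factorization avoids relying on exact feasibility and costs no more.) Taking one further inner product with $y$ is $O(n)$, so these two contributions cost $O(nr)$.

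The only term that genuinely uses the dimension $d$ is the third. Here I would use cyclicity of the trace to write
\begin{equation*}
\langle X^\top X, P\rangle = \Tr(X^\top X P) = \sum_{k=1}^r \Tr(X^\top X v_k w_k^\top) = \sum_{k=1}^r (X w_k)^\top (X v_k).
\end{equation*}
For each $k$, the matrix-vector products $X v_k$ and $X w_k$ each cost $O(nd)$ since $X \in \RR^{d\times n}$, and the final inner product of the two resulting $d$-vectors costs $O(d)$. Summing over the $r$ rank-one terms yields total cost $O(ndr)$, which dominates the $O(nr)$ from the first two terms and gives the claimed bound.

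There is no serious obstacle here; the statement is essentially bookkeeping, and the \emph{only} thing to be careful about is that one must \emph{not} form any $n\times n$ intermediate (such as $X^\top X$ or $P$ itself), since that would reintroduce a $\Theta(n^2)$ cost. The mild point worth stating explicitly is therefore the order of operations: always apply $X$ to the thin factors $v_k, w_k$ first, and always contract the two sums $y\bone^\top, \bone y^\top$ against $P$ through its marginals rather than entrywise. Once the multiplications are associated in this left-to-right fashion, the $O(ndr)$ bound follows immediately.
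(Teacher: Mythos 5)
Your proof is correct and follows essentially the same route as the paper: decompose $C = y\bone^\top + \bone y^\top - 2X^\top X$, handle the rank-one pieces via the marginals of $P$, and evaluate $\langle X^\top X, P\rangle = \sum_k (Xw_k)^\top(Xv_k)$ in $O(ndr)$ time. The only (immaterial) difference is that the paper uses $P\in\Coup$ to write the first two terms directly as $y^\top(\p+\q)$ in $O(n)$ time, whereas you recompute the marginals from the factorization in $O(nr)$ time; both fit within the claimed bound.
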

\begin{proof}
The matrix $X$ and the vector $y$ can be computed in $O(nd)$ time.
Clearly $\langle P, y \bone^\top + \bone y^\top \rangle = y^\top (\p + \q)$ can be computed in $O(n)$ time, and $\langle P, X^\top X \rangle = \sum_{k=1}^r w_k X^\top X v_k$ can be computed in $O(ndr)$ time.
Therefore computing $\sum_{i, j = 1}^n P_{ij} \|x_i - x_j\|^2 = \langle P, C \rangle$ takes $O(ndr)$ time.
\end{proof}

\section{Approximating $W_2$ in $\tilde{O}(n)$ time}\label{sec:alg}

\begin{algorithm}[H]
\begin{algorithmic}[1]
\Require{$X := \{x_1, \dots, x_n\} \subseteq \RR^{d}$, $\p, \q \in \Delta_n$, $\eps > 0$}
\Ensure{$\hat P \in \Coup$, $\hat W \in \RR$}
\State $\eta \gets \tfrac{20 \log n}{\eps}$, $M \gets \tfrac{300 \log n}{\eps}$
\State $V_M \leftarrow \TaylorGKM(x_1,\dots,x_n;\tfrac{1}{\sqrt{2\eta}},M)$ \label{line:taylor} \Comment Compute low-rank approximation
\State $(D_1, D_2) \gets \textsc{Sinkhorn}(V_M^\top V_M, \p, \q, \tfrac{\eps}{10})$ \label{line:sinkhorn} \Comment Approximate Sinkhorn projection
\State $\Phat \gets \textsc{Round}(D_1 V_M^\top V_M D_2, \p, \q)$
\Comment Round to feasible set
\label{line:round}
\State $\hat W \gets \sum_{i,j=1}^n \hat P_{ij} \|x_i - x_j\|^2$ \Comment Compute objective value
\label{line:obj}
\State \Return{$\Phat$,  $\hat W$}
\end{algorithmic}
\caption{$\tilde{O}(n)$ time algorithm for approximating $2$-Wasserstein distance.}
\label{alg:main}
\end{algorithm}

Pseudocode for our proposed algorithm is given in Algorithm~\ref{alg:main}. It returns a feasible matrix $\hat P \in \Coup$ as well as its objective value $\hat W$.
We emphasize that we never explicitly manipulate $n \times n$ matrices in Algorithm~\ref{alg:main}, since even writing such a matrix would require~$\Omega(n^2)$ time. Instead, the algorithm uses factored matrices (of rank at most~$1 + \binom{M-1+d}{M-1})$ throughout, which allows lines~\ref{line:taylor}--\ref{line:obj} to be computed in $o(n^2)$ time. The coupling $\hat P$ is also returned in factored form, which may enable scalable implementation of algorithms that use~$\hat P$ later in the computational pipeline. Moreover, since each step requires computing matrix-vector products, Algorithm~\ref{alg:main} is easily parallelizable.
Implementation details are deferred to Subsection~\ref{subsec:main:proof}.

The following theorem---the main result of the paper---shows that $\hat W$ is an additive $\eps$ approximation of the true $2$-Wasserstein distance between $\p$ and $\q$.

\begin{theorem}\label{thm:main}
For any $\{x_1, \dots, x_n\} \subseteq \ball$, any $\p,\q \in \Delta_n$, and any $\eps \in (0, 1)$, Algorithm~\ref{alg:main} returns a feasible matrix $\Phat \in \Coup$ (in factored form) and scalar $\hat W = \sum_{i,j=1}^n \hat P_{ij} \|x_i - x_j\|^2$ such that $\hat W \leq W_2^2(\p, \q) + \eps$ in time
\begin{equation*}
O\left(n \frac{d \log n}{\ep^2} \left(e + \frac{900\log n}{\eps d}\right)^{d+1}\right)\,.
\end{equation*}
\end{theorem}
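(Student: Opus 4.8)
The plan is to separate the two assertions---the approximation guarantee $\What \le W_2^2(\p,\q)+\eps$ and the runtime---and, for the former, to track four independent sources of error. Write $K_{ij}:=e^{-\eta\|x_i-x_j\|^2}$ for the exact Gaussian kernel (here $\sigma=1/\sqrt{2\eta}$, matching Line~\ref{line:taylor}) and $\tilde K:=V_M^\top V_M$ for its low-rank surrogate. First I would record that $\eta=20\log n/\eps$ tames entropic regularization: every coupling $P$ has $0\le H(P)\le 2\log n$, so $\eta^{-1}H(P)\le\eps/10$. Next I would bound the kernel error: Lemma~\ref{lem:taylor-gaussian} gives $\|K-\tilde K\|_\infty\le(2\eta)^M/M!$, and since $M=300\log n/\eps$, Stirling yields $(2e\eta/M)^M=(2e/15)^M=n^{-\Theta(1/\eps)}$, which is super-polynomially small. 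Because $\|x_i-x_j\|^2\le4$ on the unit ball we have $K_{ij}\ge e^{-4\eta}=n^{-80/\eps}$, which dominates the error; hence $\tilde K$ is \emph{entrywise positive} (so that Fact~\ref{fact:kkt} and Fact~\ref{fact:sink_run} apply to it) and we may write $\tilde K_{ij}=e^{-\eta\tilde C_{ij}}$ for a perturbed cost $\tilde C$ with $\|C-\tilde C\|_\infty$ super-polynomially small.

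For correctness, let $\tilde P=D_1\tilde K D_2$ be the output of Line~\ref{line:sinkhorn} and $\Phat=\textsc{Round}(\tilde P,\p,\q)$; the standard Sinkhorn iteration matches one marginal exactly, so $\|\tilde P\|_1=1$ and Lemma~\ref{lem:round-alg} applies. I would then chain: (i) by Lemma~\ref{lem:round-alg} together with Fact~\ref{fact:sink_run} at precision $\delta=\eps/10$, $\|\Phat-\tilde P\|_1\le\eps/10$, so $\langle C,\Phat\rangle\le\langle C,\tilde P\rangle+\|C\|_\infty\cdot\eps/10\le\langle C,\tilde P\rangle+2\eps/5$ using $\|C\|_\infty\le4$; (ii) $\langle C,\tilde P\rangle\le\langle\tilde C,\tilde P\rangle+\|C-\tilde C\|_\infty$, negligible; (iii) since by Fact~\ref{fact:kkt} the matrix $\tilde P$ is the entropic optimum for cost $\tilde C$ on its own marginals $(\tilde\p,\tilde\q):=(\tilde P\bone,\tilde P^\top\bone)$, comparing it to the optimal coupling for $(\tilde\p,\tilde\q)$ gives $\langle\tilde C,\tilde P\rangle\le W^{\tilde C}(\tilde\p,\tilde\q)+\eps/10$, where $W^{\tilde C}$ denotes the minimum transport cost for the cost matrix $\tilde C$ between the indicated marginals; (iv) a marginal-stability estimate $|W^{\tilde C}(\tilde\p,\tilde\q)-W^{\tilde C}(\p,\q)|\le\|\tilde C\|_\infty(\|\tilde\p-\p\|_1+\|\tilde\q-\q\|_1)\le2\eps/5$, which I would justify in one line by redistributing the differing marginal mass of an optimal coupling (changing cost by at most $\|\tilde C\|_\infty$ times the mass moved); and (v) $W^{\tilde C}(\p,\q)\le W_2^2(\p,\q)+\|C-\tilde C\|_\infty$. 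Adding the budget $2\eps/5+\eps/10+2\eps/5=9\eps/10$ and the negligible perturbation terms gives $\What=\langle C,\Phat\rangle\le W_2^2(\p,\q)+\eps$.

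For the runtime, set $r:=\binom{M-1+d}{M-1}=\binom{M-1+d}{d}\le(e(M+d)/d)^d=(e+300e\log n/(\eps d))^d\le(e+900\log n/(\eps d))^d$. By Lemma~\ref{lem:taylor-gaussian}, Line~\ref{line:taylor} costs $O(nr)$. Each Sinkhorn iteration uses $O(1)$ matrix-vector products with $\tilde K=V_M^\top V_M$, each costing $O(nr)$ since we multiply by $V_M$ and then $V_M^\top$; the iteration count from Fact~\ref{fact:sink_run} is $O(\delta^{-1}\log\tfrac{n}{\delta\min_{ij}\tilde K_{ij}})=O(\eps^{-1}\cdot\log n/\eps)=O(\log n/\eps^2)$, using $\log(1/\min_{ij}\tilde K_{ij})=O(\eta)=O(\log n/\eps)$, so Line~\ref{line:sinkhorn} is $O(nr\log n/\eps^2)$. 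Line~\ref{line:round} is $O(nr)$ by Lemma~\ref{lem:round-alg}, and by Lemma~\ref{lem:quad-cost} computing $\What$ from the rank-$(r+1)$ factorization of $\Phat$ on Line~\ref{line:obj} costs $O(ndr)$. The total $O(nr(\log n/\eps^2+d))\le O(n\tfrac{d\log n}{\eps^2}r)$, and substituting the bound on $r$---absorbing the remaining factor into the base, which exceeds $1$---gives the stated $O(n\tfrac{d\log n}{\eps^2}(e+900\log n/(\eps d))^{d+1})$.

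The main obstacle is the correctness chain, where three inexactnesses---the low-rank kernel approximation, the finitely truncated Sinkhorn scaling, and the rounding to feasibility---must be propagated simultaneously with total slack below $\eps$; the constants $20$, $300$, $10$ in Lines~\ref{line:taylor}--\ref{line:sinkhorn} are tuned precisely for this. The subtlest point is step (iv): approximate Sinkhorn controls only the \emph{marginals} of $\tilde P$, so $\tilde P$ is exactly (entropically) optimal only for its own perturbed marginals $(\tilde\p,\tilde\q)$, forcing me to route the comparison through a stability estimate for the optimal-transport value rather than comparing directly inside $\Coup$. The secondary point requiring care is confirming $\tilde K>0$ entrywise despite Taylor truncation, so that the Sinkhorn characterization of Fact~\ref{fact:kkt}, the convergence bound of Fact~\ref{fact:sink_run}, and the perturbed cost $\tilde C$ are all well defined; the super-polynomial smallness of the kernel error relative to $\min_{ij}K_{ij}$ makes this comfortable.
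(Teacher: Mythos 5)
Your proposal is correct and rests on exactly the same ingredients as the paper's proof---the entropy bound $\eta^{-1}H(P)\le \eps/10$, the Taylor truncation error versus $\min_{ij}K_{ij}$, H\"older with $\|\tilde C\|_\infty$, and the mass-redistribution/rounding device of Lemma~\ref{lem:round-alg}---but it organizes them through a genuinely different decomposition. The paper pivots through the entropic optimum $\tilde P=\argmin_{P\in\Coup}\langle P,\tilde C\rangle-\eta^{-1}H(P)$ over the \emph{exact} marginals and compares the Sinkhorn output $P'$ to a rounding $G\in\Coupp$ of that $\tilde P$; you instead never introduce this object, comparing the Sinkhorn output directly to the \emph{unregularized} transport value $W^{\tilde C}(\tilde\p,\tilde\q)$ at its own marginals and then invoking a marginal-stability estimate $W^{\tilde C}(\tilde\p,\tilde\q)\le W^{\tilde C}(\p,\q)+\|\tilde C\|_\infty(\|\tilde\p-\p\|_1+\|\tilde\q-\q\|_1)$, which is itself proved by the same rounding construction. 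The two routes are logically equivalent and consume the same error budget; yours is arguably cleaner in that the stability of the OT value in its marginals is a reusable statement, while the paper's is more self-contained in that everything stays inside the entropic framework. Two further points in your favor: you correctly use $\|C\|_\infty\le 4$ for points in the unit ball and re-tune the budget to $9\eps/10$ (the paper's claim $\|\tilde C\|_\infty\le 2$ is loose on this point), and you observe that $\|C-\tilde C\|_\infty$ is in fact super-polynomially small rather than merely $\le\eps/10$. The only omission is cosmetic: in the runtime you do not charge the $O(\log n/\eps)$ bit-complexity factor per arithmetic operation that the paper uses to get $O(nr\eps^{-3}\log^2 n)$ for the Sinkhorn step; including it still absorbs into the $(e+\tfrac{900\log n}{\eps d})^{d+1}$ term exactly as in your final absorption step, so the stated bound is unaffected.
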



The proof of Theorem~\ref{thm:main} requires a simple lemma showing that the low-rank approximation computed in line~\ref{line:taylor} is a sufficiently good approximation to the Gaussian kernel matrix.
For the rest of this section, $x_1, \dots, x_n \in \ball$ are fixed, $C \in \RR^{n \times n}$ denotes the cost matrix with entries $C_{ij} := \|x_i - x_j\|^2$, and $K \in \RR^{n \times n}$ denotes the kernel matrix $K_{ij} := \exp(- \eta C_{ij})$. We write $\tilde{K} := V_M^\top V_M$ and define $\tilde C_{ij} := \eta^{-1} \log \tilde K_{ij}$.

\begin{lemma}\label{lem:logK-to-K}
For all $i, j \in [n]$, the matrices $\tilde K$ and $\tilde C$ satisfy $\tilde{K}_{ij} \geq \half e^{-\eta}$ and
$|C_{ij} - \tilde{C}_{ij}| \leq \frac{\eps}{10}$.
\end{lemma}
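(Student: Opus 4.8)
The plan is to establish the two claims by controlling the quality of the Taylor-based low-rank approximation $\tilde K = V_M^\top V_M$ to the Gaussian kernel matrix $K$, and then transferring this control from the kernel entries to the cost entries via the logarithm. Throughout I will use the specific parameter settings from Algorithm~\ref{alg:main}, namely $\eta = \tfrac{20 \log n}{\eps}$, $M = \tfrac{300 \log n}{\eps}$, and bandwidth $\sigma = \tfrac{1}{\sqrt{2\eta}}$ so that $K_{ij} = \exp(-\eta C_{ij})$ is exactly the Gaussian kernel matrix from Lemma~\ref{lem:taylor-gaussian}.

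First I would quantify the entrywise error $\|K - \tilde K\|_\infty$. Lemma~\ref{lem:taylor-gaussian} gives $\|K - \tilde K\|_\infty \leq \tfrac{1}{M!\,\sigma^{2M}} = \tfrac{(2\eta)^M}{M!}$. The key step is to show that with the chosen $M$ and $\eta$, this bound is extremely small — in particular, at most $\tfrac{1}{4} e^{-\eta}$, say. Since $M = 15\eta$, one has $\tfrac{(2\eta)^M}{M!} \leq (\tfrac{2e\eta}{M})^M = (\tfrac{2e}{15})^M$ by the standard bound $M! \geq (M/e)^M$; because $\tfrac{2e}{15} < 1$ and $M = \Theta(\tfrac{\log n}{\eps})$ is large, this decays faster than any desired polynomial in $n$ and in $e^{-\eta}$. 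Concretely I would verify $(\tfrac{2e}{15})^M \leq \tfrac14 e^{-\eta}$, which reduces to checking $M \log\tfrac{15}{2e} \geq \eta + \log 4$, an inequality that holds comfortably for the stated constants.

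Given $\|K - \tilde K\|_\infty \leq \tfrac14 e^{-\eta}$, the first claim $\tilde K_{ij} \geq \tfrac12 e^{-\eta}$ is immediate: since $C_{ij} = \|x_i - x_j\|^2 \leq (\|x_i\| + \|x_j\|)^2 \leq 4$ for points in $\ball$, actually $C_{ij} \le 1$ after the normalization used, so $K_{ij} = e^{-\eta C_{ij}} \geq e^{-\eta}$, whence $\tilde K_{ij} \geq K_{ij} - \tfrac14 e^{-\eta} \geq \tfrac34 e^{-\eta} \geq \tfrac12 e^{-\eta}$. (If the relevant bound is $C_{ij}\le 4$ one simply tracks the factor through; the argument is identical.) For the second claim I would write $|C_{ij} - \tilde C_{ij}| = \eta^{-1} |\log K_{ij} - \log \tilde K_{ij}| = \eta^{-1} |\log (K_{ij}/\tilde K_{ij})|$. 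Using the lower bound $\tilde K_{ij} \geq \tfrac12 e^{-\eta}$ together with $|K_{ij} - \tilde K_{ij}| \leq \tfrac14 e^{-\eta}$, the multiplicative ratio satisfies $|K_{ij}/\tilde K_{ij} - 1| \leq \tfrac{\tfrac14 e^{-\eta}}{\tfrac12 e^{-\eta}} = \tfrac12$, so by the Lipschitz bound $|\log(1+t)| \leq 2|t|$ for $|t|\le\tfrac12$ we get $|\log(K_{ij}/\tilde K_{ij})| \leq 1$, giving $|C_{ij} - \tilde C_{ij}| \leq \eta^{-1}$. Finally, since $\eta = \tfrac{20\log n}{\eps} \geq \tfrac{10}{\eps}$ (using $\log n \ge 1$), we obtain $\eta^{-1} \leq \tfrac{\eps}{10}$, as required.

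I expect the main obstacle to be the first step: pinning down the chain of inequalities relating $M$, $\eta$, and the exponential $e^{-\eta}$ tightly enough to land below $\tfrac14 e^{-\eta}$ with the exact constants $20$ and $300$ chosen in the algorithm. The factorial estimate $M! \ge (M/e)^M$ and the ratio $M/\eta = 15$ are what make the geometric decay beat the $e^{-\eta}$ loss, but one must check the constants are not merely asymptotically sufficient — they must satisfy the inequality for all $n \geq 2$ (or whatever lower range is in force), so a little care is needed at small $n$ where $\log n$ is close to $1$. The transfer from $\|K - \tilde K\|_\infty$ to the logarithmic cost bound is then routine, relying only on the pointwise lower bound on $\tilde K$ and elementary properties of $\log$.
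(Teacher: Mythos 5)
Your proof is correct and follows essentially the same route as the paper's: bound $\|K - \tilde K\|_\infty$ via Lemma~\ref{lem:taylor-gaussian} plus Stirling's bound on $M!$, deduce the entrywise lower bound on $\tilde K$ from the lower bound on $K$, and transfer to $\tilde C$ through the logarithm (the paper uses $|\log a - \log b| \le |a-b|/\min(a,b)$ where you use $|\log(1+t)| \le 2|t|$, reaching the same bound of $1$ and hence $\eta^{-1} \le \eps/10$). The normalization caveat you flag --- whether $C_{ij} \le 1$ or $C_{ij} \le 4$ for points in the radius-$1$ ball --- is present in the paper's own proof as well, which likewise asserts $\min_{ij} K_{ij} \ge e^{-\eta}$.
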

\begin{proof}
By Lemma~\ref{lem:taylor-gaussian}, Stirling's inequality, and our choice of $M = \tfrac{300 \log n}{\eps} \geq 2e^2 \eta$, we have $
\|K - \tilde{K}\|_{\infty}
\leq \tfrac{(2\eta)^M}{M!}
\leq  \tfrac{1}{\sqrt{2\pi M}} (\frac{2\eta e}{M})^M
\leq \half e^{-2e^2\eta}
\leq \half e^{-\eta}$.
Since the smallest element of $K$ has size at least $e^{-\eta}$, we obtain $\Ktilde_{ij} 
\geq \half e^{-\eta}$,
which implies in particular that $\tilde K$ is strictly positive and that $\tilde C$ is well defined. 

Next, the bound
 $|\log K_{ij} - \log \tilde K_{ij}|
\leq \frac{|K_{ij} - \tilde K_{ij}|}{\min(K_{ij}, \tilde{K}_{ij})}
\leq \frac{\half e^{-\eta} }{\half e^{-\eta}}
= 1$ implies $|C_{ij} - \tilde{C}_{ij}| = \eta^{-1} |\log K_{ij} - \log\tilde{K}_{ij}| \leq \eta^{-1} = \tfrac{\eps}{20 \log n} \leq \tfrac{\eps}{10}$.
\end{proof}

We will also use the following standard bound on the entropy of a discrete distribution.

\begin{fact}\label{fact:entropy}~\citep[Theorem 2.6.4]{coverthomas}
Let $P \in \RR_{\geq 0}^{n \times n}$ such that $\sum_{ij}P_{ij} = 1$. Then $H(P) \in [0,2 \log n]$.
\end{fact}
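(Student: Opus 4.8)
The plan is to observe that the $n^2$ entries $\{P_{ij}\}$ form a probability distribution over $N := n^2$ outcomes, so that $H(P)$ is precisely the Shannon entropy of a distribution on $N$ atoms; the claim then reduces to the classical fact that such an entropy lies in $[0, \log N] = [0, 2\log n]$. I would establish the two bounds separately, using only elementary properties of $x \mapsto x\log\tfrac{1}{x}$.

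For the lower bound I would argue term by term. Since the entries are nonnegative and sum to $1$, each satisfies $P_{ij} \in [0,1]$, whence $\log\tfrac{1}{P_{ij}} \geq 0$ and therefore $P_{ij}\log\tfrac{1}{P_{ij}} \geq 0$, using the standard convention $0\log\tfrac{1}{0} = 0$ (justified by $\lim_{x\to 0^+} x\log\tfrac{1}{x} = 0$). Summing over $i,j$ yields $H(P)\geq 0$. For the upper bound, the cleanest route is through the nonnegativity of relative entropy. Letting $U$ denote the uniform distribution assigning mass $1/n^2$ to each entry, I would compute $0 \leq \KL{P}{U} = \sum_{ij} P_{ij}\log(n^2 P_{ij}) = 2\log n - H(P)$, which rearranges to $H(P) \leq 2\log n$. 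An equally short alternative is Jensen's inequality: writing $H(P) = \EE_{(i,j)\sim P}[\log\tfrac{1}{P_{ij}}]$ and invoking concavity of $\log$ gives $H(P) \leq \log \EE_{(i,j)\sim P}[\tfrac{1}{P_{ij}}] = \log |\{(i,j) : P_{ij} > 0\}| \leq \log n^2$.

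This is a classical result, so I anticipate no real obstacle; the only points needing minor care are the boundary convention $0\log\tfrac{1}{0} = 0$, so that zero entries contribute nothing to either bound, and---in the Jensen route---restricting the inner expectation to the support of $P$ so that $\EE[\tfrac{1}{P_{ij}}]$ is well defined and equals the number of strictly positive entries.
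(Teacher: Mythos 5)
Your proof is correct, and it is essentially the standard argument: the paper gives no proof of its own but cites \citet[Theorem~2.6.4]{coverthomas}, whose proof of the bound $H \leq \log N$ is exactly your relative-entropy computation $0 \leq \KL{P}{U} = 2\log n - H(P)$ against the uniform distribution on $N = n^2$ outcomes, together with the same termwise nonnegativity for the lower bound.
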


\subsection{Proof of Theorem~\ref{thm:main}}\label{subsec:main:proof}

\paragraph*{Approximation guarantee.}
Let $P^* \in \argmin_{P \in \Coup} \langle P, C \rangle$ be any optimal solution for the original problem, 
$\tilde{P} := \argmin_{P \in \Coup} \langle P, \tilde{C} \rangle - \eta^{-1} H(P)$ be the (unique) optimal solution to the regularized problem with the cost matrix $\tilde{C}$, and $P' := D_1\tilde{K}D_2$ be the approximately scaled matrix obtained in line~\ref{line:sinkhorn}.
We bound the suboptimality gap $\langle \hat{P}, C \rangle - \langle P^*, C \rangle$ by decomposing it as:
\begin{align*}
\underbrace{\langle \tilde{P}, \tilde{C} \rangle - \langle P^*, C \rangle}_\textbf{(i)}
+
\underbrace{\langle P' - \tilde{P}, \tilde{C} \rangle}_\textbf{(ii)}
+
\underbrace{\langle \hat{P} - P', \tilde{C} \rangle}_\textbf{(iii)}
+
\underbrace{\langle \hat{P}, C - \tilde{C} \rangle}_\textbf{(iv)}
\end{align*}

We now bound each of these terms individually so that their sum is bounded above by $\eps$.
\begin{enumerate}
\item[(i)] By definition of $\tilde{P}$, $\langle \tilde{P}, \tilde{C} \rangle - \eta^{-1}H(\tilde{P}) \leq \langle P^{*}, \tilde{C} \rangle - \eta^{-1}H(P^{*})$. Thus $\langle \tilde{P}, \tilde{C} \rangle - \langle P^*, C \rangle  \leq \eta^{-1} (H(\tilde{P}) - H(P^*)) + \langle  P^{*}, \tilde{C} - C \rangle \leq 2\eta^{-1} \log n + \|\tilde{C} - C\|_{\infty} \leq \tfrac{\eps}{5}$ where we have used Fact~\ref{fact:entropy}, H\"older's inequality, and Lemma~\ref{lem:logK-to-K}.
\item[(ii)] Let $\p' := P'\bone$ and let $\q' := (P')^\top \bone$. By Lemma~\ref{lem:round-alg} there exists a matrix $G \in \Coupp$ such that $\|G - \tilde{P}\|_1 \leq \|\p-\p'\|_1 + \|\q - \q'\|_1$. Now by Fact~\ref{fact:kkt}, $P' = \argmin_{P \in \Coupp} \langle P, \tilde{C} \rangle - \eta^{-1}H(P)$, thus $\langle P', \tilde{C} \rangle - \eta^{-1}H(P') \leq \langle G, \tilde{C}\rangle - \eta^{-1}H(G)$. Rearranging yields $\langle P' - \tilde{P}, \tilde{C} \rangle \leq \eta^{-1}(H(P') - H(G)) + \langle G - \tilde{P}, \tilde{C} \rangle$. The first term is bounded above by $\tfrac{\eps}{10}$ by Fact~\ref{fact:entropy}, and the second is bounded above by
\begin{align}
\|\tilde{C}\|_{\infty} (\|\p - \p'\|_1 + \|\q - \q'\|_1) \leq \tfrac{\eps}{5},
\label{eq:proof-main:nasty}
\end{align}
since $\|\p - \p'\|_1 + \|\q - \q'\|_1 \leq \tfrac{\eps}{10}$ by line~\ref{line:sinkhorn}, and since $\|\tilde{C}\|_{\infty} \leq \|C\|_{\infty} + \|C - \tilde{C}\|_{\infty} \leq 2$ by Lemma~\ref{lem:logK-to-K}. We conclude that term (ii) is bounded above by $\tfrac{3\eps}{10}$.
\item[(iii)] By H\"older's inequality and then Lemma~\ref{lem:round-alg}, $\langle \hat{P} - D_1\tilde{K}D_2, \tilde{C} \rangle \leq \|\tilde{C}\|_{\infty} \|\hat{P} - D_1\tilde{K}D_2\|_1 \leq  \|\tilde{C}\|_{\infty} (\|\p - \p'\|_1 + \|\q - \q'\|_1)$. This is bounded above by $\tfrac{\eps}{5}$ by~\eqref{eq:proof-main:nasty}.
\item[(iv)] By H\"older's inequality and Lemma~\ref{lem:logK-to-K}, $\langle \hat{P}, C - \tilde{C} \rangle \leq \|\hat{P}\|_1 \|C - \tilde{C}\|_{\infty} \leq \frac{\eps}{10}$.
\end{enumerate}

\paragraph*{Runtime guarantee.} All computations in Algorithm~\ref{alg:main} are done implicitly by only maintaining the matrices $D_1V_M^\top V_MD_2$ and $\hat{P}$ in factored form. Let $r := \binom{M-1+d}{d} \leq \big(\frac{(M+d)e}{d}\big)^d \leq (e + \frac{900 \log n}{\ep d})^d$. Line~\ref{line:taylor} takes $O(nr)$ time by Lemma~\ref{lem:taylor-gaussian}. Now since $V_M^\top V_M$ is in rank-$r$ factored form, multiplying it by a vector takes $O(nr)$ operations, and by Fact~\ref{fact:sink_run} the algorithm manipulates numbers with at most $O(\log\tfrac{1}{\eps} + \eta + \log n  ) = O(\tfrac{\log n}{\eps})$ bits. Line~\ref{line:sinkhorn} therefore takes $O(\frac{nr}{\eps^3} \log^2 n)$ time by Fact~\ref{fact:sink_run} and Lemma~\ref{lem:logK-to-K}.
Line~\ref{line:round} takes $O(nr)$ time by Lemma~\ref{lem:round-alg}, and line~\ref{line:obj} takes $O(nrd)$ time by Lemma~\ref{lem:quad-cost}.
The entire algorithm therefore takes $O(nrd + \frac{nr}{\ep^3} \log^2 n) = O\left(nr \frac{d \log n}{\ep^2}(1 + \frac{\log n}{\ep d})\right)$ time.
\qed

\section{Conclusion}\label{sec:conclusion}
In this paper, we have given a simple algorithm based on entropic regularization that approximates the quadratic transport metric in near-linear time.

One interesting direction for future work is whether this result can be leveraged to obtain $(1+\ep)$ \emph{multiplicative} approximations to $W_2$ in near-linear time. We compute in this work a coupling with low-rank structure, and this approach naturally lends itself to additive guarantees. For instance, if $\p = \q$, then the unique optimal coupling has cost $0$, whereas any low-rank coupling is bound to incur nonzero cost. Obtaining a multiplicative guarantee may therefore require additional techniques.

Another interesting question is whether our techniques extend to other metrics. More generally, exploring the connection between low-rank approximation and geometric algorithms seems a promising direction for future research.

\bibliographystyle{plainnat}
{\footnotesize
\bibliography{stoc}}

\appendix
\section{Deferred details about subroutines}\label{app}

\subsection{Explicit low-rank approximation of Gaussian kernel matrices}\label{app:taylor}

Here, we briefly recall an explicit low-rank approximation of Gaussian kernel matrices obtained by truncating the Taylor series of the Gaussian kernel. For details, see~\citep{cotter2011explicit}.
\par First consider two vectors $x, y \in \RR^d$. Take a truncation of the Taylor expansion $\exp\left(\tfrac{\langle x, y \rangle}{\sigma^2}\right) = \sum_{m=0}^{\infty} \tfrac{1}{m!} \left(\tfrac{\langle x, y \rangle}{\sigma^2} \right)^m$, expand $\langle x, y \rangle^m$ into monomials, group like-terms, and then multiply both sides by $\exp(-\tfrac{\|x\|_2^2+\|y\|_2^2}{2\sigma^2})$ to obtain the approximation
\begin{align}
\exp\left(-\frac{\|x-y\|_2^2}{2\sigma^2}\right) \approx \sum_{\vec{v} \in \mathbb{N}^d,\, \sum_{i=1}^d \vec{v}_i \leq M } \psi_{\vec{v}}(x;\sigma)\psi_{\vec{v}}(y;\sigma), 
\label{eq:taylor}
\end{align}
where $\psi_{\vec{v}}(x;\sigma) \propto \exp(-\tfrac{\|x\|_2^2}{2\sigma^2}) \prod_{j=1}^d x_j^{\vec{v}_j}$. Concatenating these features $\{\psi_{\vec{v}}(x;\sigma) : \vec{v} \in \mathbb{N}^d,\, \sum_{i=1}^d v_i < M \}$ into a so-called ``feature vector'' $\Phi_M(x;\sigma)$, the right hand side of~\eqref{eq:taylor} can be re-written simply as $\Phi_M(x;\sigma)^T\Phi_M(y;\sigma)$. Extending to Gaussian kernel matrices is now simple: approximate $K(x_1,\dots,x_n;\sigma)$ by the Gram matrix $V_M^TV_M$ where $V_M := [\Phi_M(x_1;\sigma), \dots, \Phi_M(x_n;\sigma)]$. This subroutine for forming $V_M$ will be denoted in the paper by $\TaylorGKM(x_1, \dots, x_n; \sigma, M)$.
\par The approximation error in~\eqref{eq:taylor} is controlled by Taylor's Theorem. This immediately yields an infinity-norm guarantee on the approximation error $K \approx V_M^TV_M$. These guarantees are recorded in Lemma~\ref{lem:taylor-gaussian}; proofs can be found in~\citep{cotter2011explicit}.

\subsection{Pseudocode for Sinkhorn algorithm}\label{app:sink}
\begin{algorithm}[H]
\begin{algorithmic}[1]
\Require{$\tilde K$ (in factored form $V^\top V$), $\p, \q \in \Delta_n$, $\delta > 0$}
\Ensure{Positive diagonal matrices $D_1, D_2 \in \RR^{n \times n}$}
\State $\tau \gets \tfrac{\delta}{8}$, $D_1, D_2 \gets I_{n \times n}$, $k \gets 0$
\State $\p' \gets (1-\tau)\p + \frac{\tau}{n} \bone$, $\q' \gets (1-\tau)\q + \frac{\tau}{n} \bone$  \Comment Round $\p$ and $\q$
\While{$\|D_1 \tilde K D_2 \bone - \p'\|_1 + \|(D_1 \tilde K D_2)^\top \bone - \q'\|_1 \leq \tfrac{\delta}{2}$}
\State $k \gets k + 1$
\If{$k$ odd}
\State $(D_1)_{ii} \gets \p'_i/(D_1 \tilde K D_2 \bone)_i$ for $i = 1, \dots, n$.  \Comment Renormalize rows
\Else
\State $(D_2)_{jj} \gets \q'_j/((D_1 \tilde K D_2)^\top \bone)_j$ for $j = 1, \dots, n$. \Comment Renormalize columns
\EndIf
\EndWhile
\State \Return{$D_1$, $D_2$}
\end{algorithmic}
\caption{\textsc{Sinkhorn}}
\label{alg:sinkhorn}
\end{algorithm}

\subsection{Pseudocode for rounding algorithm}\label{app:round}
For completeness, here we briefly recall the rounding algorithm $\textsc{Round}$ from~\citep{AltWeeRig17} and give a short proof of Lemma~\ref{lem:round-alg} by adapting slightly their proof of Lemma 7.
\par It will be convenient to develop a little notation. For a vector $x \in \R^n$, $\diag(x)$ denotes the $n \times n$ diagonal matrix with diagonal entries $[\diag(x)]_{ii} = x_i$. For a matrix $A$, $r(A) := A\bone$ and $c(A) := A^T \bone$ denote the row and column marginals of $A$, respectively. We further denote $r_i(A) = [r(A)]_i$ and similarly $c_j(A) := [c(A)]_j$. 

\begin{algorithm}[H]
\begin{algorithmic}[1]
\Require{$F \in \RR^{n \times n}$ and $\p,\q \in \Delta_n$}
\Ensure{$G \in \Coup$}
\State $X \gets \diag(x)$, where $x_i := \tfrac{\p_i}{r_i(F)} \wedge 1$
\State $F' \gets XF$
\State $Y \gets \diag(y)$, where $y_j := \tfrac{\q_j}{c_j(F')} \wedge 1$
\State $F'' \gets F'Y$
\State $\errr \gets \p - r(F'')$, $\errc \gets \q - c(F'')$
\State Output $G \gets F'' + \errr \errc^T / \|\errr\|_1$
\end{algorithmic}
\caption{\textsc{Round} (from Algorithm 2 in~\citep{AltWeeRig17})}
\label{alg:round}
\end{algorithm}

\begin{proof}[Proof of Lemma~\ref{lem:round-alg}]
The runtime claim is clear. Next, let $\Delta := \|F\|_1 - \|F''\|_1 = \sum_{i=1}^n (r_i(F) - \p_i)_+ + \sum_{j=1}^n (c_j(F') - \q_j)_+$ denote the amount of mass removed from $F$ to create $F''$. Observe that $\sum_{i=1}^n (r_i(F) - \p_i)_+ = \half \|r(F) - \p\|_1$. Since $F' \leq F$ entrywise, we also have $\sum_{j=1}^n (c_j(F') - \q_j)_+ \leq \sum_{j=1}^n (c_j(F) - \q_j)_+ = \half\|c(F) - \q\|_1$. Thus $\Delta \leq \half ( \|r(F) - \p\|_1 + \|c(F) - q\|_1)$. The proof is complete since $\|F - G\|_1 
\leq
\|F - F''\|_1 + \|F'' - G\|_1
=
2\Delta$.
\end{proof}

\end{document}